\newtheorem{thm}{Theorem}[section]
\newtheorem{prop}{Proposition}
\theoremstyle{definition}
\theoremstyle{remark}
\newtheorem{rem}{Remark}
\def\munderbar#1{\underline{\sbox\tw@{$#1$}\dp\tw@\z@\box\tw@}}
\newcommand{\s}{\mathcal{S}}
\newcommand{\E}{\mathbb{E}}
\renewcommand{\P}{\mathbb{P}}
\newcommand{\R}{\mathbb{R}}
\begin{document}
\title[The $P^*$ rule in the stochastic Holt-Lawton model]{The $P^*$ rule in the stochastic Holt-Lawton model \\of apparent competition}
\author[S.J. Schreiber]{}
\address{Department of Evolution and Ecology, University of California, Davis, CA 95616 USA}
\email{sschreiber@ucdavis.edu}

\subjclass{Primary: 92D25, 60J05}
 \keywords{apparent competition, environmental stochasticity, Lyapunov exponents, stochastic difference equations, exclusion}

\thanks{The author is supported by U.S. National Science Foundation grant DMS-1716803}

\maketitle

\centerline{\scshape Sebastian J. Schreiber$^*$}
\medskip
{\footnotesize
 \centerline{Department of Evolution and Ecology}
 \centerline{ and Center for Population Biology}
   \centerline{University of California}
   \centerline{Davis, CA 95616, USA}
} 

\begin{abstract}
In $1993$, Holt and Lawton introduced a stochastic model of two host species parasitized by a common parasitoid species. We introduce and analyze a generalization of these  stochastic difference equations with any number of host species, stochastically varying parasitism rates, stochastically varying host intrinsic fitnesses, and stochastic immigration of parasitoids. Despite the lack of direct, host density-dependence, we show that this system is dissipative i.e. enters a compact set in finite time for all initial conditions. When there is a single host species, stochastic persistence and extinction of the host is characterized using external Lyapunov exponents corresponding to the average per-capita growth rates of the host when rare. When a single host persists, say species $i$, a explicit expression is derived for the average density, $P_i^*$, of the parasitoid at the stationary distributions supporting both species. When there are multiple host species, we prove that the host species with the largest $P_i^*$ value stochastically persists, while the other host species are asymptotically driven to extinction. A review of the main mathematical methods used to prove the results and future challenges are given.  
\end{abstract}

\section{Introduction}

\citet{volterra-26} proved that competition for a single, limiting resource results in competitive exclusion via the $R^*$ rule: the competing species that suppresses the resource to the lowest equilibrium density excludes the other competing species~\citep{tilman-82}. Volterra's mathematical derivation was for ordinary differential equation models where the per-capita growth rates of the competing species are linear functions of resource availability~\citep[see discussion in][]{hofbauer_sigmund1998}. Since this work of Volterra, MathSciNet lists $279$ publications on the ``competitive exclusion principle'' of which $19$ appeared in \emph{Discrete and Continuous Dynamical Systems: Series B}~\cite{MR3327897,MR2912073,MR3986255,MR3986226,MR3693844,MR3639164,MR3639160,MR3639157,MR3432309,MR3327906,MR3327904,MR3228878,MR2525150,MR2300323,MR2291867,MR2224875,MR2129373,kuang_nagy2004,MR1821409}. These $19$ papers proved new principles of competitive exclusion for a diversity of situations including spatial chemostat models~\cite{MR3327897}, within-host competition of multiple viral types~\cite{MR3327906}, competing technologies~\cite{MR3327904}, epidemiological models of competing disease strains~\cite{MR2129373}, stoichiometric models of tumor growth~\cite{kuang_nagy2004}, and discrete-time, size-structured chemostat models~\cite{MR1821409}.

Nearly fifty years after Volterra's paper, \citet{holt-77} inverted Volterra's model by considering non-competing prey who share a predator. For ordinary differential equation models, \citet{holt-77} showed that the addition of a new prey species to a predator-prey system could reduce the equilibrium density of the original prey species or even drive it extinct. This reduction or exclusion arises as an indirect effect by which  the novel prey increases the predator density and, thereby, increases predation pressure on the original prey species. \citet{holt-77} termed this indirect effect, ``apparent competition'' as to an observer unaware of the shared predator species, the prey appear to be competing. Despite the fundamental ecological importance of this interaction~\citep{holt2017,schreiber_krivan2020}, MathSciNet only list $18$ publications on ``apparent competition'' of which $6$ appear in mathematics journals~\citep{loman-88,jde-04,MR2065415,MR2312293,MR2645916,MR2952099}. All $6$ of these publications use ordinary differential equation models which assume overlapping generations of the prey and predator species. However, some of the most important examples of apparent competition occur in host-parasitoid systems~\citep{holt-93,bonsall-hassell-97,morris2004,holt2017}. 

Due to the tight coupling of their life-cycles, host-parasitoid systems can have discrete, synchronized generations and, consequently, are modeled using difference equations~\citep{hassell-78,mills-getz-96,hassell2000}. As the dynamics of these models can be exceedingly complex, there are few mathematical theorems about their dynamics~\citep[see, however,][]{hsu-03,jamieson_reis2018}. To model apparent competition in host-parasitoid systems, \citet{holt-lawton-93} introduced stochastic difference equations with two, non-competing, host species sharing a common parasitoid. These host species experienced stochastic fluctuations in their intrinsic fitnesses, and the parasitoid species had a stochastic source of immigration. Using a mixture of time-averaging arguments and numerical simulations, \citet{holt-lawton-93} derived a $P^*$-rule: the host species that can support the higher, average parasitoid density excludes the other host species. Regarding their derivation, \citet{holt-lawton-93} wrote ``we have doubtless ignored subtleties in specifying how the parameters must be constrained in their temporal evolution, so that densities are ensured to be bounded away from zero. Numerical simulations suggest that our conclusions hold for reasonable patterns of temporal variability.''

Here, we provide a mathematically rigorous analysis of an extension of \citet{holt-lawton-93}'s model to allow for any number of host species and stochastic variation in the parasitism rates. The analysis includes mathematical proofs of the stabilizing effect of parasitoid immigration, a characterization of  persistence for a single host species and the associated $P^*$ value, and  the $P^*$ rule.  The stochastic, difference-equation model is introduced  in Section~\ref{sec:model}.  The main results about this model are presented in Section~\ref{sec:results}. The results are also illustrated numerically and followed by a discussion of future challenges. To prove the results, we use methods developed by \citet{benaim_schreiber2019} whose key elements are summarized in Section~\ref{sec:BS}. The proofs of the two main theorems for the host-parasitoid models are given in Sections~\ref{sec:proof1} and ~\ref{sec:proof2}.

\section{The Model and Assumptions}\label{sec:model}

We assume that there are $k\ge 1$ host species with densities $x=(x_1,\dots,x_k)$ and one  parasitoid species with density $y$. Let $z(t)=(x(t),y(t))$ be the state of the host-parasitoid community in the $t$-th generation where $x_i(t)\ge 0$ for all $i$ and $y(t)\ge 0$. Each individual of host species $i$ escapes parasitism with probability $\exp(-a_i(t) y(t))$ in the $t$-th generation i.e. the parasitoid attacks are Poisson distributed with mean $a_i(t)y(t)$ on host $i$ where $a_i(t)$ is the attack rate of the parasitoid on host $i$ in the $t$-th generation. Each individual of host species $i$ that escapes parasitism produces $R_i(t)$ offspring that emerge in the next generation. Hosts that do not escape parasitism become parasitoids in the next generation. In addition to this production of parasitoids, there is ``recurrent immigration by the parasitoid from outside the local community''\citep{holt-lawton-93} with $I(t)$ immigrants entering the parasitoid population at the end of the $t$-th generation. Thus, the community dynamics  are
\begin{equation}\label{eq:main}
\begin{aligned}
x_i(t+1)=&R_i(t)x_i(t)\exp(-a_i(t) y(t)) \quad i=1,2,\dots,k\\
y(t+1)=& \sum_{i=1}^k x_i(t)(1-\exp(-a_i(t) y(t)))+I(t).
\end{aligned}
\end{equation}
This model generalizes \citet{holt-lawton-93}'s model by allowing for more than two host species and by allowing the attack rates $a_i(t)$ to stochastically vary. 

To complete the specification of the model, we make the following  assumptions about the $R_i(t)$, $a_i(t)$ and $I(t)$:
\begin{description}
	\item[A1] For each $1\le i\le k$, $R_i(0),R_i(1),R_i(2),\dots$ is a sequence of independent and identically distributed (i.i.d.) random variables taking values in $[\munderbar{R},\bar{R}]$ where $\bar{R}\ge \munderbar{R}>0$.
	\item[A2] For each $1\le i\le k$, $a_i(0),a_i(1),a_i(2),\dots$ is a sequence of i.i.d. random variables taking values in $[\munderbar{a},\bar{a}]$ where $\bar{a}\ge \munderbar{a}>0$.
	\item[A3] $I(0),I(1),I(2),\dots$ is an i.i.d. sequence taking values in $[\munderbar{I},\bar{I}]$ where $\bar{I}\ge \munderbar{I}>0$.
\end{description}
\begin{rem}For several of our main results, the i.i.d. assumption can be relaxed to certain types of stationary sequences (see Remark~\ref{remark} in Section~\ref{sec:BS}). Moreover, recent work by \citet{hening_nguyen2020} allows for relaxing the compactness assumptions.  \end{rem}

\section{Results and Discussion}\label{sec:results}

\begin{figure}
\includegraphics[width=0.5\textwidth]{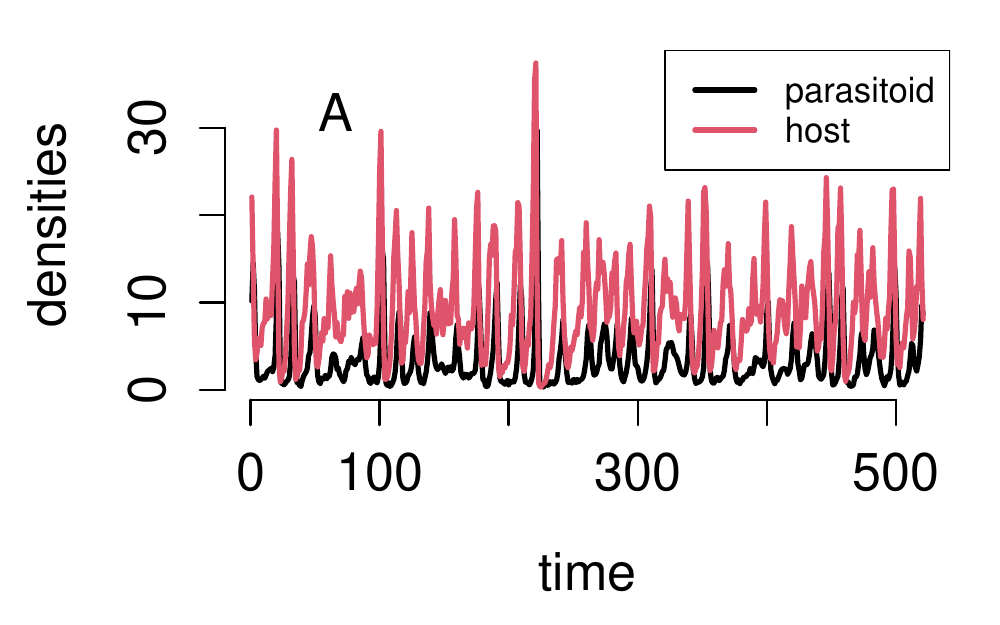}\includegraphics[width=0.5\textwidth]{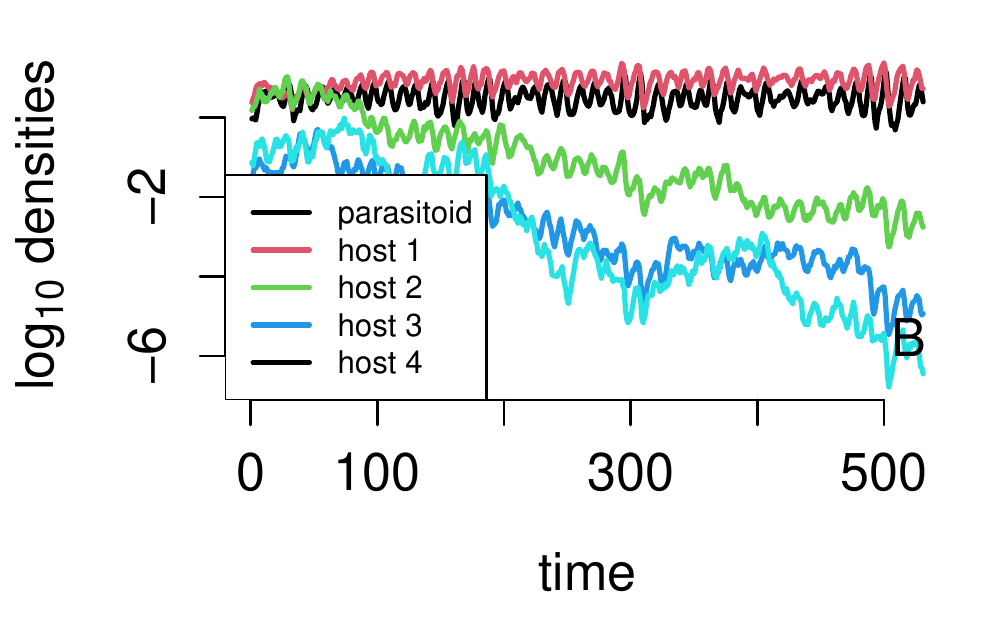}
\caption{Stochastic persistence and the $P^*$ rule for the host-parasitoid model \eqref{eq:main}. In A, there is $k=1$ host species and the condition for stochastic persistence is met. In B, there are $k=4$ host species which only differ in the variance of their $R_i(t)$ terms. Parameter values: $a_i(t)=0.1$ for all $i$ and $t$, $R_i(t)=0.9+1.1\beta^R_i(t)$ where $\beta_i^R(t)$ are $\beta$ distributed with both scale parameters $=k+1-i$, and $I(t)=0.1+0.9\beta^I(t)$ where $\beta^I(t)$ are $\beta$ distributed with both scale parameters $=2$.}\label{fig:one}
\end{figure}

Our first result is to show that solutions of \eqref{eq:main} enter a compact set after a finite amount of time. In contrast, without parasitoid immigration, $k=1$ host species, and constant $R_i$ and $a_i$, equation~\eqref{eq:main} is the Nicholson-Bailey model whose solutions exhibit unbounded oscillations whenever both species are present~\citep{jamieson_reis2018}. The following proposition proves that immigration stabilizes these unbounded oscillations. 

\begin{prop}\label{prop:compact}
There exists a compact set $\s\subset [0,\infty)^k \times [\munderbar{I},\infty)$ such that 
\[
(x(t),y(t))\in \s  \mbox{ for all } t \ge 4
\]
whenever $(x(0),y(0))$ is non-negative i.e. $x_i(0)\ge 0$ for all $i$ and $y(0)\ge 0$.
\end{prop}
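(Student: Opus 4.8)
The plan is to establish explicit constant bounds on each coordinate, one at a time and in a definite order, that take effect after a bounded number of generations, and then assemble them into a box-shaped compact set. First I would dispose of the lower bound on $y$, which is immediate: since $1-\exp(-a_i(t)y(t))\ge 0$ whenever $y(t)\ge 0$, the second equation of \eqref{eq:main} gives $y(t+1)\ge I(t)\ge \munderbar{I}$, so $y(t)\ge \munderbar{I}$ for every $t\ge 1$ and every non-negative initial condition. This pins down the lower face of the target set and explains why $\s$ is sought inside $[0,\infty)^k\times[\munderbar{I},\infty)$.

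The crux is the upper bound on the hosts, and a naive estimate fails here: using only $y(t)\ge\munderbar{I}$ yields $x_i(t+1)\le \bar{R}\,x_i(t)\exp(-\munderbar{a}\,\munderbar{I})$, which need not be contractive. The key observation I would exploit is that a large host density at time $t$ forces a large parasitoid density at the \emph{same} time $t$, making the host recursion effectively self-limiting. Concretely, once $y(t)\ge\munderbar{I}$ (i.e. $t\ge 1$) we have $1-\exp(-a_i(t)y(t))\ge c:=1-\exp(-\munderbar{a}\,\munderbar{I})>0$, whence $y(t+1)\ge c\sum_i x_i(t)\ge c\,x_i(t)$. Re-indexing gives $y(t)\ge c\,x_i(t-1)$ for $t\ge 2$, and combining this with the crude bound $x_i(t)\le\bar{R}\,x_i(t-1)$ (so $x_i(t-1)\ge x_i(t)/\bar{R}$) produces the central inequality
\[
y(t)\ge \frac{c}{\bar{R}}\,x_i(t)\qquad(t\ge 2).
\]
Substituting this back into the host equation gives, for $t\ge 2$,
\[
x_i(t+1)\le \bar{R}\,x_i(t)\exp\!\left(-\tfrac{\munderbar{a}\,c}{\bar{R}}\,x_i(t)\right),
\]
and since $u\mapsto \bar{R}\,u\,e^{-\lambda u}$ with $\lambda=\munderbar{a}\,c/\bar{R}$ is bounded on $[0,\infty)$ by its maximum $\bar{R}/(e\lambda)=\bar{R}^2/(e\,\munderbar{a}\,c)=:M$, I conclude that $x_i(t)\le M$ for all $i$ and all $t\ge 3$.

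Finally I would feed the host bound back into the parasitoid equation: for $t\ge 3$, using $1-\exp(-a_i(t)y(t))\le 1$ and $I(t)\le\bar{I}$,
\[
y(t+1)\le \sum_i x_i(t)+\bar{I}\le kM+\bar{I},
\]
so $y(t)\le kM+\bar{I}$ for $t\ge 4$. Taking $\s=[0,M]^k\times[\munderbar{I},\,kM+\bar{I}]$ then gives a compact subset of $[0,\infty)^k\times[\munderbar{I},\infty)$ containing $(x(t),y(t))$ for all $t\ge 4$, as claimed. I expect the only genuinely non-routine step to be the derivation of the same-time comparison $y(t)\ge(c/\bar{R})\,x_i(t)$: it is what converts the non-contractive linear host growth into a uniformly bounded, unimodal update, and it is precisely the quantitative form of Holt and Lawton's heuristic that immigration-supported parasitism caps host densities. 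Everything else is bookkeeping of explicit constants and generation indices.
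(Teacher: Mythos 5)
Your proof is correct and follows essentially the same route as the paper: bound $y$ below by $\munderbar{I}$, hence by a positive multiple of the host densities, which turns the host update into a bounded Ricker-type map, and then feed the resulting host bound back into the parasitoid equation. The only difference is that you insert the extra comparison $x_i(t-1)\ge x_i(t)/\bar{R}$ to align time indices (costing a harmless factor of $\bar{R}$ in the constant), which actually repairs a small off-by-one slip in the paper's displayed inequality where $y(t-1)$ is bounded below using $\sum_j x_j(t-1)$ rather than $\sum_j x_j(t-2)$.
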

  
\begin{proof}
Let $x(0)=(x_1(0),\dots,x_k(0)),y(0)$ be non-negative.  Then 
\[
y(t)\ge \munderbar{I} \mbox{ for all }t\ge 1
\mbox{ and }
 y(t)\ge \sum_{j=1}^k x_j(t-1) (1-\exp(-\munderbar{a} \munderbar{I})) \mbox{ for }t\ge 2.\]
 Define $\alpha= (1-\exp(-\munderbar{a} \munderbar{I})).$ For $t\ge 3$, 
\[
\begin{aligned}
x_i(t)\le &  \bar{R} x_i(t-1) \exp\left(-a_i(t-1) \alpha \sum_{j=1}^k x_j(t-1)\right)\\
\le & \bar{R} x_i(t-1) \exp\left(-\munderbar{a} \alpha x_i(t-1)\right)\\
\le & \frac{\bar{R}}{\munderbar{a} \alpha e} 
\end{aligned}
\]
Thus, for $t\ge 4$,
\[
y(t)\le k\frac{\bar{R}}{\munderbar{a} \alpha e}+\bar{I}.
\]
Setting 
\[
\s=\left[0,\frac{\bar{R}}{\munderbar{a} \alpha e} \right]^k \times 
\left[\munderbar{I},k\frac{\bar{R}}{\munderbar{a} \alpha e}+\bar{I}\right]
\]
completes the proof of the proposition.
\end{proof}

To characterize whether the host persists or not in the presence of the parasitoid, we use two notions of stochastic persistence~\citep[see reviews in][]{jdea-11,schreiber2017}. The first notion corresponds to what \citet{chesson-82} called stochastically bounded coexistence and takes an ensemble point of view. This form of persistence, as shown in equation~\eqref{eq:ensemble} below, implies that probability of a small species density far into the future is small.  The second form of stochastic persistence, introduced in \citep{jmb-11}, takes the perspective of a single, typical realization of the Markov chain. This form of persistence, as shown in equation~\eqref{eq:typical} below, implies that the fraction of time spent below small species densities is small. Figure~\ref{fig:one}A illustrates  the host-parasitoid dynamics in the case of stochastic persistence. For a set $A$, let $\#A$ denote the cardinality of the set.

\begin{thm}\label{thm:one} Assume $k=1$ and assumptions \textbf{A1}--\textbf{A3} hold. If $\E[\ln R_1(t)]<\E[a_1(t)]\E[I(t)]$ and $x_1(0)>0$, then 
\[
\limsup_{t\to\infty} \frac{1}{t}\ln x_1(t)<0 \mbox{ with probability one.}
\]
If $\E[\ln R_1(t)]>\E[a_1(t)]\E[I(t)]$, then there exist $\alpha,\beta>0$ such that for any $\delta>0$
\begin{equation}\label{eq:ensemble}
\limsup_{t\to\infty} \P\left[x_1(t)\le \delta\right]\le \alpha \delta^\beta
\end{equation}
and
\begin{equation}\label{eq:typical}
\limsup_{t\to\infty} \frac{\#\{1\le s\le t: x_1(s)\le \delta\}}{t}\le \alpha \delta^\beta \mbox{ with probability one}
\end{equation}
whenever $x_1(0)>0$ and $y(0)\ge 0$. Moreover, \begin{equation}\label{eq:P*}
\int y \mu(dx,dy)=\E[\ln R_1(t)]/\E[a_1(t)]
\end{equation}
for any invariant measure $\mu(dx,dy)$ supported on $(0,\infty)^2$; the existence of such invariant measures follows from \eqref{eq:typical}.
\end{thm}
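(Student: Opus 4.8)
The plan is to organize everything around the \emph{external Lyapunov exponent} giving the per-capita growth rate of the host when rare, and then to invoke the persistence/exclusion machinery of \citet{benaim_schreiber2019} summarized in Section~\ref{sec:BS}. The starting point is the identity obtained by taking logarithms in the first line of \eqref{eq:main},
\begin{equation*}
\ln x_1(t+1)-\ln x_1(t)=\ln R_1(t)-a_1(t)\,y(t),
\end{equation*}
so that $\tfrac1T\ln x_1(T)-\tfrac1T\ln x_1(0)=\tfrac1T\sum_{t=0}^{T-1}\bigl(\ln R_1(t)-a_1(t)y(t)\bigr)$. On the extinction boundary $\{x_1=0\}$ the parasitoid equation collapses to $y(t+1)=I(t)$, so $y(t)$ is i.i.d.\ with the law of $I$ and there is a unique ergodic invariant measure there. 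First I would compute the host growth rate against this boundary measure: since $a_1(t)$ is independent of $y(t)=I(t-1)$, independence of the driving sequences gives
\begin{equation*}
\lambda:=\E[\ln R_1(t)]-\E[a_1(t)\,y(t)]=\E[\ln R_1(t)]-\E[a_1(t)]\,\E[I(t)],
\end{equation*}
which is exactly the quantity whose sign splits the two hypotheses of the theorem.

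For extinction ($\lambda<0$) I would not even need the boundary measure: the parasitoid equation yields the pathwise bound $y(t)\ge I(t-1)$ for $t\ge1$, hence
\begin{equation*}
\tfrac1T\ln x_1(T)\le\tfrac1T\ln x_1(0)+\tfrac1T\sum_{t=0}^{T-1}\ln R_1(t)-\tfrac1T\sum_{t=1}^{T-1}a_1(t)\,I(t-1).
\end{equation*}
The sequence $\{a_1(t)I(t-1)\}_t$ is i.i.d.\ because the $a_1$- and $I$-sequences are independent and consecutive products share no common factor; so two applications of the strong law of large numbers drive the right-hand side to $\E[\ln R_1(t)]-\E[a_1(t)]\E[I(t)]=\lambda<0$ almost surely, giving the first claim.

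For persistence ($\lambda>0$) I would verify the hypotheses of the \citet{benaim_schreiber2019} criterion. Proposition~\ref{prop:compact} supplies the required global compact attractor $\s$, and invariance of the boundary $\{x_1=0\}$ is immediate from \eqref{eq:main}. The extinction set carries the single ergodic measure identified above, on which the external Lyapunov exponent equals $\lambda>0$; the abstract theorem then delivers both the ensemble bound \eqref{eq:ensemble} and the empirical-occupation bound \eqref{eq:typical}, with the polynomial exponent $\beta$ arising from the uniform bounds $\munderbar R\le R_1\le\bar R$ and $\munderbar a\le a_1\le\bar a$ on the log-growth increments. Tightness of the empirical occupation measures, together with \eqref{eq:typical}, then produces at least one invariant measure $\mu$ charging the interior $(0,\infty)^2$.

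Finally, for the $P^*$ identity \eqref{eq:P*} I would integrate the log-increment against an interior invariant measure $\mu$. Because the increment $\ln R_1(t)-a_1(t)y(t)$ is bounded on $\s$ and $a_1(t)$ is independent of the state $(x(t),y(t))$, Birkhoff's theorem gives $\tfrac1T\sum_{t=0}^{T-1}\bigl(\ln R_1(t)-a_1(t)y(t)\bigr)\to\E[\ln R_1(t)]-\E[a_1(t)]\int y\,\mu(dx,dy)$; since $x_1(T)$ stays in the compact set $\s$ and $\mu$ is recurrent in the interior, the left-hand side equals $\lim_T\tfrac1T\ln x_1(T)=0$, yielding \eqref{eq:P*}. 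The main obstacle I anticipate is not any single computation but the careful matching of the model to the standing hypotheses of \citet{benaim_schreiber2019}---in particular establishing enough irreducibility and accessibility in the interior to pin down the recovery rate and hence the exponent $\beta$, and justifying that the interior Lyapunov exponent genuinely \emph{vanishes} (rather than merely being nonpositive by boundedness) so that the $P^*$ computation is rigorous.
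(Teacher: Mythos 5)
Your proposal follows essentially the same route as the paper: the pathwise bound $y(t)\ge I(t-1)$ plus the strong law of large numbers for extinction, the unique boundary ergodic measure with external Lyapunov exponent $\E[\ln R_1(t)]-\E[a_1(t)]\E[I(t)]$ fed into Theorem~\ref{thm:persist} for persistence, and the vanishing of the interior per-capita growth rate for \eqref{eq:P*}. The only differences are cosmetic: the paper dispatches the obstacle you flag at the end (that the interior exponent genuinely vanishes) by citing Proposition~\ref{prop:zero} rather than re-deriving it via Birkhoff, and it extends \eqref{eq:P*} from ergodic to arbitrary interior invariant measures via the ergodic decomposition theorem, a step you should make explicit since the theorem is stated for all invariant measures.
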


Beyond characterizing host persistence, Theorem~\ref{thm:one} via \eqref{eq:P*} provides a mathematical proof of one of \citet{holt-lawton-93}'s conclusions even when the attack rates fluctuate: ``in a fluctuating environment the long-term average parasitoid density [$P^*$] is proportional to the long-term average logarithmic host growth rate.''  The definition of an invariant measure is given in section~\ref{sec:BS}.

Provided that each host species can persist with the parasitoid, our next theorem shows that these long-term average parasitoid densities determine the winner of apparent competition.

\begin{thm}\label{thm:two} Assume assumptions \textbf{A1}--\textbf{A3} hold and
\[
\E[\ln R_i(t)]>\E[a_i(t)]\E[I(t)] \mbox{ for }i=1,2,\dots,k.
\]
Define 
\[
P_i^*:=\E[\ln R_i(t)]/\E[a_i(t)] \mbox{ for }i=1,2,\dots,k.
\]
If $P_1^*>P_i^*$ for $i=2,\dots,k$, then there exist $\alpha,\beta>0$ such that for any $\delta>0$ 
\[
\limsup_{t\to\infty} \frac{1}{t}\ln \max_{2\le i\le k}x_i(t)<0 \mbox{ with probability one}
\]
and
\[
\limsup_{t\to\infty} \frac{\#\{1\le s\le t: x_1(s)\le \delta\}}{t}\le \alpha \delta^\beta \mbox{ with probability one for }\delta\le 1
\]
and
\[
\limsup_{t\to\infty} \P\left[x_1(t)\le \delta\right]\le \alpha \delta^\beta \mbox{ for }\delta\le 1
\]
whenever $\prod_{i=1}^kx_i(0)>0$.
\end{thm}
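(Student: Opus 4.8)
The plan is to split the statement into two essentially independent pieces: the exponential extinction of the inferior hosts $2,\dots,k$, and the stochastic persistence of the superior host $1$ in both the ensemble and the typical sense. The extinction half is the engine of the $P^*$ mechanism and can be handled by an explicit Lyapunov-type comparison, whereas the persistence half will follow from the invasion-rate criterion of \citet{benaim_schreiber2019} recalled in Section~\ref{sec:BS}, fed with the identity \eqref{eq:P*} of Theorem~\ref{thm:one}. Throughout, let $\mathcal F_s$ denote the $\sigma$-algebra generated by the initial state and the noise through generation $s-1$, so that $z(s)$ (in particular $y(s)$) is $\mathcal F_s$-measurable while $R_j(s),a_j(s),I(s)$ are independent of $\mathcal F_s$; note also that $\prod_j x_j(0)>0$ forces $x_j(t)>0$ for all $t,j$.

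For the extinction of a fixed species $i\ge 2$, I would study the weighted log-ratio
\[
W_i(t):=\frac{\ln x_1(t)}{\E[a_1(t)]}-\frac{\ln x_i(t)}{\E[a_i(t)]}.
\]
Taking logarithms in \eqref{eq:main} and using that $R_j(s),a_j(s)$ are independent of $\mathcal F_s$ while $y(s)$ is $\mathcal F_s$-measurable, the conditional increment of $W_i$ is
\[
\E\left[W_i(s+1)-W_i(s)\,\middle|\,\mathcal F_s\right]=\left(P_1^*-y(s)\right)-\left(P_i^*-y(s)\right)=P_1^*-P_i^*,
\]
since $\E[\ln R_j(s)]/\E[a_j(s)]=P_j^*$ and the density-dependent term $\E[a_j(s)]y(s)/\E[a_j(s)]=y(s)$ cancels. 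The crucial point is precisely this cancellation of $y(s)$, which leaves the constant positive drift $P_1^*-P_i^*$. Writing $W_i(t)-W_i(0)=t(P_1^*-P_i^*)+\sum_{s<t}\xi_s$ with $(\xi_s)$ a martingale-difference sequence of bounded increments (because $R_j\in[\munderbar R,\bar R]$, $a_j\in[\munderbar a,\bar a]$, and $y(s)$ is bounded for $s\ge 4$ by Proposition~\ref{prop:compact}), the strong law of large numbers for martingales gives $\tfrac1t W_i(t)\to P_1^*-P_i^*$ almost surely. Since $x_1(t)$ is bounded above on $\s$, one has $\limsup_t\tfrac1t\ln x_1(t)\le 0$; solving for $\ln x_i(t)$ then yields
\[
\limsup_{t\to\infty}\frac1t\ln x_i(t)\le -\E[a_i(t)]\,(P_1^*-P_i^*)<0\quad\text{a.s.},
\]
and maximizing over $i=2,\dots,k$ proves the first assertion.

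For the persistence of species $1$, I would apply the criterion of Section~\ref{sec:BS} to the host-$1$ coordinate on the compact invariant set $\s$, whose extinction set is $\{x_1=0\}$. It suffices to verify that the external Lyapunov exponent $r_1(\mu)=\E[\ln R_1(t)]-\E[a_1(t)]\int y\,\mu(dx,dy)$ is strictly positive for every ergodic invariant measure $\mu$ with $\mu(\{x_1=0\})=1$. Because each face $\{x_j=0\}$ is forward invariant, ergodicity fixes the set $S(\mu)$ of hosts persisting under $\mu$, with $1\notin S(\mu)$. If some $j\in S(\mu)$, then the argument behind \eqref{eq:P*}---a persisting species has zero long-run growth rate---gives $\int y\,\mu(dx,dy)=P_j^*\le\max_{2\le j\le k}P_j^*$; if $S(\mu)=\emptyset$, then $\mu$ is supported on $\{x=0\}$, where $y(t+1)=I(t)$, so $\int y\,\mu(dx,dy)=\E[I(t)]$. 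Since the standing hypothesis $\E[\ln R_j(t)]>\E[a_j(t)]\E[I(t)]$ gives $\E[I(t)]<P_j^*$ for every $j$, in all cases $\int y\,\mu(dx,dy)\le\max_{2\le j\le k}P_j^*<P_1^*$, whence $r_1(\mu)>0$. The Benaïm--Schreiber machinery then delivers the ensemble and typical persistence bounds $\limsup_t\P[x_1(t)\le\delta]\le\alpha\delta^\beta$ and $\limsup_t t^{-1}\#\{s\le t:x_1(s)\le\delta\}\le\alpha\delta^\beta$.

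The extinction computation is essentially forced once one notices the cancellation of $y(s)$, so I expect the real work to lie in the persistence half: verifying the technical hypotheses of the Section~\ref{sec:BS} criterion (accessibility of the relevant region, and the integrability guaranteeing that a persisting species has exactly zero long-run growth rate, so that $\int y\,\mu=P_j^*$), together with the reduction, via ergodic decomposition, to the characterization of all ergodic invariant measures supported on the boundary $\{x_1=0\}$.
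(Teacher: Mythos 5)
Your proposal is correct, and while the persistence half coincides with the paper's argument, the extinction half takes a genuinely different and more elementary route. The paper proves $\limsup_t \frac{1}{t}\ln\max_{i\ge 2}x_i(t)<0$ by invoking the full exclusion machinery (Theorem~\ref{thm:exclusion}): it verifies coexistence on the face $\s^I$ with $I=\{1\}$, negativity of the external Lyapunov exponents $r_i(\mu)=\E[a_i(t)](P_i^*-P_1^*)<0$ for ergodic $\mu$ with $S(\mu)=\{1\}$, and --- the most delicate step --- accessibility of $\s^I$, which it establishes by analyzing weak* limit points of the occupation measures $\Pi_t$ and ruling out, via the ergodic decomposition and Proposition~\ref{prop:zero}, any ergodic component charging $\{x_1>0,x_i>0\}$. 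Your log-ratio functional $W_i(t)=\ln x_1(t)/\E[a_1(t)]-\ln x_i(t)/\E[a_i(t)]$ bypasses all of this: the conditional-increment computation is valid (it uses exactly that $a_j(s),R_j(s)$ are independent of $\mathcal F_s$ while $y(s)$ is $\mathcal F_s$-measurable, which is where assumptions \textbf{A1}--\textbf{A2} enter, and which is why the paper's Remark warns that the result can fail for autocorrelated attack rates), the increments are bounded for $s\ge 4$ by Proposition~\ref{prop:compact}, and the martingale strong law plus $\limsup_t\frac{1}{t}\ln x_1(t)\le 0$ delivers the explicit rate $-\E[a_i(t)](P_1^*-P_i^*)$. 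What your route buys is a self-contained, quantitative extinction bound needing neither the accessibility verification nor the persistence of host $1$ as an input; what the paper's route buys is conformity with the general Bena\"im--Schreiber framework, which continues to apply in settings where the exact cancellation of $y(s)$ in the weighted difference is unavailable. For the persistence of host $1$, your classification of ergodic measures on $\{x_1=0\}$ (either $\int y\,\mu=\E[I(t)]<P_1^*$ or $\int y\,\mu=P_j^*<P_1^*$ for some $j\in S(\mu)$) is exactly the paper's argument feeding Theorem~\ref{thm:persist} with $p_1=1$, so no gap there.
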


Thus, this theorem mathematically confirms \citet{holt-lawton-93}'s conclusion: ``regardless of the exact cause of the fluctuations, the outcome should be no different than that expected in a constant environment with stable populations: one host tends to displace alternative hosts from the assemblage, and the winner is the host sustaining the highest average parasitoid density.'' 

\subsection*{Discussion} As noted by \citet{holt-lawton-93},  fluctuations in $R_i(t)$ can influence the winner of apparent competition. For example, suppose that there are two host species with the same mean intrinsic fitness and experiencing the same attack rates i.e. $\E[R_1(t)]=\E[R_2(t)]$ and $a_1(t)=a_2(t)$ for all $t$. However, host $2$ experiences variation in its intrinsic fitness (i.e. $\mbox{Var}[R_2(t)]>0$) while host $1$ experiences no variation (i.e. $\mbox{Var}[R_1(t)]=0$).  As $\ln$ is a concave function, Jensen's inequality implies that $\E[\ln R_1(t)]>\E[\ln R_2(t)]$. Thus, $P_1^*>P_2^*$ and Theorem~\ref{thm:two} implies that host species $2$ is excluded due to having greater variation in its intrinsic fitness. This phenomena is illustrated in Figure~\ref{fig:one}B with $k=4$ host species that only differ in the variances of their intrinsic fitness,  $\mbox{Var}[R_i(t)]$.

Theorems~\ref{thm:one} and \ref{thm:two} are largely possible due to the exponential form of the Poisson escape function (i.e. $\exp(-a_iy)$ due to \citet{nicholson-bailey-35}) and the absence of host-density dependence. In particular, the Poisson escape function assumes that parasitoids are not time-limited and their attacks are randomly distributed among the hosts. Thus, future mathematical challenges include understanding whether or not the $P^*$ rule holds when the escape function accounts for aggregated parasitoid attacks (e.g. the negative binomial form $(1+a_iy/k)^{-k}$ introduced by \citet{may-78}), the escape function accounts for parasitoid time-limitation (e.g. $\exp(-a_iy/(1+b_ix_i))$ as introduced by \citet{hassell1977}), or the hosts experience direct density-dependence (e.g. $R_i(t)\exp(-c_ix_i-a_iy)$ as developed by \citet{may-hassell-anderson-tonkyn-81}). 

Another avenue for future research is to account for feedback between ecology and evolution in the model~\citep{schoener-11}. For deterministic, continuous-time models of two prey sharing a predator, evolution of the predator's attack rate can, by reducing the effects of apparent competition, mediate coexistence but could also lead to oscillatory and chaotic dynamics~\cite{schreiber_burger2011,nrm-15}. Whether similar phenomena arise for the discrete-time, stochastic host-parasitoid model considered here remains to be seen. 

\section{Main Tools from \citet{benaim_schreiber2019}}\label{sec:BS}

To prove Theorems~\ref{thm:one} and \ref{thm:two}, we used methods developed by \citet{benaim_schreiber2019}. These methods apply to models with a mixture of ecological and auxiliary variables (see Remark~\ref{remark} and the proofs in Sections~\ref{sec:proof1},\ref{sec:proof2} for more details). The ecological variables correspond to the densities of $n$ species given by $u=(u_1,u_2,\dots,u_n)\in [0,\infty)^n=:\R^n_+$. The species dynamics interact with the auxiliary variable $v$ which lies in $(-\infty,\infty)^m=:\R^m$. In the proof of Theorem~\ref{thm:one} the parasitoid density is treated as an auxiliary variable, while in the proof of Theorem~\ref{thm:two} the densities of host species $2$ through $k$ also are used as auxiliary variables.

The ecological and auxiliary variables may be influenced by stochastic forces captured by a sequence of independent and identically distributed (i.i.d.) random variables $\xi(1),\xi(2),\dots$ taking values in a Polish space $\Xi$ i.e. a separable completely metrizable topological space. The stochastic difference equations considered by \citet{benaim_schreiber2019} are of the form:
\begin{equation}\label{eq:main2}
\begin{aligned}
u_i(t+1) &= u_i(t) f_i(u(t),v(t),\xi(t)) \quad i=1,2,\dots,n& \mbox{ (species densities)}\\
v(t+1)&= G(u(t),v(t),\xi(t))& \mbox{ (auxiliary variables)}.
\end{aligned}
\end{equation}
with standing assumptions:
\begin{description}
\item [B1] For each $i=1,2,\dots,n$, the fitness function $f_i(z,\xi)$ is continuous in $z=(u,v)$, measurable in $(z,\xi)$, and strictly positive.
\item [B2] The auxiliary variable update function $G$ is continuous in $z=(u,v)$ and measurable in $(z,\xi)$.
\item [B3] There is a compact subset $\s$ of $\R^n_+\times \R^m$ such that all solutions $z(t)=(u(t),v(t))$ to \eqref{eq:main} satisfy $z(t)\in \s$ for $t$ sufficiently large. 
\item [B4] For all $i=1,2,\dots,n$, $\sup_{z,\xi}|\log f_i(z,\xi)|<\infty$.
\end{description}
Beyond \eqref{eq:main}, many finite-dimensional, discrete-time population models satisfy assumptions \textbf{B1--B4} \citep[see,e.g.,][]{benaim_schreiber2019,schreiber2020}.

\begin{rem}\label{remark}
For the proofs of Theorems~\ref{thm:one} and \ref{thm:two}, host species $1$ of \eqref{eq:main} is always treated as a species density (e.g. $u_1=x_1$), the parasitoid density is always treated as an auxiliary variable (e.g $v_1=y$), the host species $2$ through $k$ are treated either as species densities (e.g. $u_i=x_i$ for $i=2,\dots,k$) or as auxiliary variables (e.g. $v_i=x_i$ for $i=2,\dots,k$), and the i.i.d. random variables $\xi(t)$ equal $(R_i(t),a_i(t),I(t))$. Our assumptions \textbf{A1}--\textbf{A3} and Proposition~\ref{prop:compact} ensure that assumptions \textbf{B1}-\textbf{B4} hold for \eqref{eq:main}.\end{rem}

\begin{rem}\label{remark} One can also account for temporal correlations in $R_i(t)$, $a_i(t)$, and $I(t)$ using additional auxiliary variables. For example, $v_i(t)=\ln R_i(t)$ could be modeled as a first-order autoregressive process given by $v_i(t+1)=\alpha_i v_i(t)+\xi_i(t+1)$ where the $\xi_i(t)$ are i.i.d. Provided that $|\alpha_i|<\infty$ and $\xi_i(t)$ take values in a compact set, the assumptions \textbf{B1}--\textbf{B4} hold. Alternatively, one can model the fluctuations in $R_i(t)$ using a finite-state Markov chain. To see how, suppose that $R_i(t)$ takes on a finite number of distinct,positive values, $r_1,\dots,r_\ell$, with transition probabilities $p_{ij}$ i.e. $\P[R_i(t+1)=r_j|R_i(t)=r_i]=p_{ij}$. One can represent this Markov chain $v_i(t)=R_i(t)$ as a composition of random maps by defining  $\xi(t)=(\xi_1(t),\dots,\xi_\ell(t))$ to be a random vector such that $\P[\xi_i(t)=r_j]=p_{ij}$, and defining $G_i(v_i,\xi)=\xi_{\pi(v_i)}$ where $\pi(r_i)=i$ for $1\le i\le \ell.$ Proposition~\ref{prop:compact} holds when temporal correlations in the $R_i(t),a_i(t),I(t)$ are modelled in this way. The first two conclusions about exclusion and stochastic persistence of Theorem~\ref{thm:one} holds when  $\E[\ln R_1(t)]<\E[a_1(t)I(t)]$ and $\E[\ln R_1(t)]>\E[a_1(t)I(t)]$, respectively. However, equation~\eqref{eq:P*} of Theorem~\ref{thm:one} need not hold if the attack rates $a_i(t)$ exhibit temporal autocorrelations and,  consequently, Theorem~\ref{thm:exclusion} need not hold in this case. \end{rem}

To evaluate whether species are increasing or decreasing when rare, we consider their per-capita growth rate averaged over the fluctuations in  $u(t)$, $v(t)$, and  $\xi(t+1)$. To this end, recall that a Borel probability measure $\mu$ on $\s$ is \emph{an invariant probability measure} if for all continuous functions $h:\s\to\R$
\[
\int_\s h(z)\mu(dz)= \int_\s \E[h(Z(1))|Z(0)=z]\mu(dz).
\]
An invariant probability measure $\mu$ is \emph{an ergodic probability measure} if it can not be written as a non-trivial convex combination of invariant probability measures. For any invariant probability measure $\mu$, define $r_i(\mu)$ as  the \emph{realized per-capita growth rate} of population $i$:
\begin{equation}\label{eq:per-capita}
r_i(\mu)=\int_\s \E[\log f_i(z,\xi(t)) ] \mu(dz).
\end{equation}

For any ergodic probability measure $\mu$, define \emph{the species supported by $\mu$}, denoted $S(\mu)$, to be the unique subset $I\subset \{1,2,\dots,n\}$ such that $\mu(\{(u,v)\in\s:u_i>0$ iff $i\in I\})=1$. The following proposition implies that $r_i(\mu)=0$ for all $i\in S(\mu)$. Alternatively, for $i\notin S(\mu)$, $r_i(\mu)$ need not be zero in which case $r_i(\mu)$ measures the rate of growth of species $i$ when introduced at infinitesimally small densities. For $i\notin S(\mu)$, $r_i(\mu)$ is also known as the \emph{external Lyapunov exponent} of $\mu$. The following result is proven in \cite[Proposition 1]{benaim_schreiber2019}.

\begin{prop}
\label{prop:zero}  Let $\mu$ be an ergodic  probability measure. Then $r_i(\mu)=0$ for all $i\in S(\mu).$
\end{prop}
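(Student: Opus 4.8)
The plan is to prove that for an ergodic invariant measure $\mu$, the realized per-capita growth rate $r_i(\mu) = \int_\s \E[\log f_i(z,\xi(t))]\,\mu(dz)$ vanishes for every supported species $i \in S(\mu)$. The natural route is to exploit the multiplicative structure of the species dynamics by taking logarithms: from $u_i(t+1) = u_i(t) f_i(u(t),v(t),\xi(t))$ we obtain, for a species that stays strictly positive, the telescoping identity
\[
\log u_i(t) - \log u_i(0) = \sum_{s=0}^{t-1} \log f_i(z(s),\xi(s)).
\]
Dividing by $t$, the left-hand side is $\frac{1}{t}(\log u_i(t) - \log u_i(0))$, and the right-hand side is a time-average of $\log f_i$ along the orbit. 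The strategy is to show the right-hand side converges to $r_i(\mu)$ via the ergodic theorem, while the left-hand side converges to $0$; equating the two limits yields the claim.

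First I would apply Birkhoff's ergodic theorem to the stationary process generated by the Markov chain $Z(t)$ started from the ergodic measure $\mu$, using the observable $g(z,\xi) = \log f_i(z,\xi)$. Assumption \textbf{B4} gives $\sup_{z,\xi}|\log f_i(z,\xi)| < \infty$, so $g$ is bounded and hence integrable against $\mu$ coupled with the law of $\xi$, and the ergodic theorem applies to give
\[
\lim_{t\to\infty} \frac{1}{t}\sum_{s=0}^{t-1} \log f_i(z(s),\xi(s)) = r_i(\mu) \quad \text{almost surely.}
\]
Second I would control the boundary term $\frac{1}{t}(\log u_i(t) - \log u_i(0))$. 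Since $i \in S(\mu)$, the measure $\mu$ charges only states with $u_i > 0$, so the orbit keeps $u_i(t)$ strictly positive $\mu$-almost surely; moreover by \textbf{B3} the dynamics live in the compact set $\s$, so $u_i(t)$ is bounded above, giving $\limsup_t \frac{1}{t}\log u_i(t) \le 0$. The remaining and subtler half is the lower bound: I must rule out $u_i(t)$ drifting to $0$ along the orbit. Here the key observation is that the sequence $\log u_i(t)$ is, under $\mu$, a stationary sequence (as a function of the stationary process $(z(t),\xi(t))$), and a stationary real-valued sequence cannot have $\frac{1}{t}(\log u_i(t) - \log u_i(0))$ converge to a strictly negative value, since that would force $\log u_i(t) \to -\infty$ in a manner incompatible with stationarity.

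The step I expect to be the main obstacle is this last point—showing the boundary term contributes nothing in the limit, i.e. that $\frac{1}{t}(\log u_i(t)-\log u_i(0)) \to 0$ almost surely. The clean way to handle it is to use stationarity directly rather than pathwise bounds: by invariance of $\mu$ the random variable $\log u_i(1) - \log u_i(0) = \log f_i(z(0),\xi(0))$ has expectation $r_i(\mu)$, and $\E[\log u_i(t)]$ is independent of $t$ because $\mu$ is invariant and $\log u_i$ is an $\s$-integrable function (integrability of $\log u_i$ on the compact, strictly-positive support follows from $u_i$ being bounded away from both $0$ and $\infty$ on $\mathrm{supp}\,\mu$, or more carefully from a truncation/dominated convergence argument). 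Taking expectations in the telescoped identity and using that $\E[\log u_i(t)] = \E[\log u_i(0)]$ gives $t\, r_i(\mu) = \E[\log u_i(t)] - \E[\log u_i(0)] = 0$, whence $r_i(\mu) = 0$. The delicate part is justifying the integrability of $\log u_i$ against $\mu$ near the (excluded but limiting) boundary $u_i = 0$; I would address this by combining the almost-sure positivity guaranteed by $i \in S(\mu)$ with \textbf{B4}, which bounds the per-step logarithmic increments and thereby prevents the stationary sequence $\log u_i(t)$ from being non-integrable or from escaping to $-\infty$.
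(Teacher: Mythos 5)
The paper does not prove this proposition itself; it imports it as \cite[Proposition 1]{benaim_schreiber2019}, and the proof there is exactly your first argument: telescope $\log u_i(t)-\log u_i(0)=\sum_{s}\log f_i(z(s),\xi(s))$, apply Birkhoff's ergodic theorem to the stationary pair process $(z(s),\xi(s))$ started from $\mu$ (integrability of the summand is \textbf{B4}) to get $\tfrac1t\log u_i(t)\to r_i(\mu)$ almost surely, rule out $r_i(\mu)>0$ by the compactness in \textbf{B3}, and rule out $r_i(\mu)<0$ because $u_i(t)\to 0$ a.s. is incompatible with the law of $u_i(t)$ being constant in $t$ and charging $(0,\infty)$ with probability one (which is what $i\in S(\mu)$ gives). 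That half of your write-up is correct and complete, and — importantly — it requires no integrability of $\log u_i$ itself, only of the increments.

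The variant you advertise as ``the clean way'' (taking expectations in the telescoped identity and using $\E[\log u_i(t)]=\E[\log u_i(0)]$) is the weak point, because it genuinely requires $\log u_i\in L^1(\mu)$, and neither of your justifications delivers this. First, $i\in S(\mu)$ only says $\mu(\{u_i>0\})=1$; it does not make $u_i$ bounded away from $0$ on $\mathrm{supp}\,\mu$, since the topological support of $\mu$ may accumulate on $\{u_i=0\}$, so $\int_{\{u_i<1\}}(-\log u_i)\,\mu(dz)$ could a priori diverge. Second, \textbf{B4} bounds the increments $\log u_i(t+1)-\log u_i(t)$, but bounded increments of a stationary sequence say nothing about integrability of its marginal (take $X_t\equiv Y$ for a non-integrable $Y$: the increments vanish identically). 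So the expectation identity $t\,r_i(\mu)=\E[\log u_i(t)]-\E[\log u_i(0)]=0$ is not justified as written. The fix is simply to discard that variant and rely on the pathwise/distributional argument you already gave, which sidesteps the integrability question entirely.
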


Following the approach introduced by Josef Hofbauer~\citep{hofbauer-81,hofbauer_sigmund1998}, the following Theorem from \citep[Theorem 1]{benaim_schreiber2019} gives a sufficient condition for stochastic persistence. We make use of this theorem for the proofs of both Theorems~\ref{thm:one} and \ref{thm:two}. To state this theorem, define the extinction set as 
\[
\s_0=\{(u,v)\in \s: \prod_{i=1}^n u_i=0\}.
\]

\begin{thm}\label{thm:persist} If 
\begin{equation}\label{eq:criterion}
\mbox{there exist positive }p_1,\dots,p_n \mbox{ s.t. } \sum_i p_i r_i(\mu)>0\mbox{ for all ergodic $\mu$ with $\mu(\s_0)=1$}.
\end{equation}
 holds,
then there exist  $a,b>0$ such that for all $\delta\le 1$ and $Z(0)=z\in \s\setminus \s_0$
\[
\mbox{\rm(persistence in probability) }\limsup_{t\to\infty}\P\left[\min_{1\le i\le n}u_i(t)\le \delta \right]\le a (\delta)^b 
\]
and
\[
\mbox{\rm(almost-sure persistence) }\limsup_{t\to\infty}\frac{\#\{1\le s\le t: \min_{1\le i\le n}u_i(s)\le \delta\}}{t}\le a (\delta)^b 
\mbox{ almost surely.}\]
\end{thm}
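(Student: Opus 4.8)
The plan is to adapt Hofbauer's average-Lyapunov-function method to the stochastic setting, following \citet{benaim_schreiber2019}. The candidate Lyapunov function is the reciprocal of a weighted geometric mean of the densities,
\[
V(z)=\Big(\prod_{i=1}^n u_i^{p_i}\Big)^{-\eta}=\exp\big(-\eta\, S(z)\big),\qquad S(z):=\sum_{i=1}^n p_i\log u_i,
\]
for a small $\eta>0$ to be chosen at the end; note $V$ blows up as $z\to\s_0$. The point of this choice is the telescoping identity $S(Z(t+1))-S(Z(t))=\sum_i p_i\log f_i(Z(t),\xi(t))$, so that the one-step conditional drift of $S$ at a state $z$ equals the weighted growth rate $\tilde r(z):=\sum_i p_i\,\E[\log f_i(z,\xi(t))]$, whose $\mu$-average is exactly $\sum_i p_i r_i(\mu)$. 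Assumption \textbf{B4} makes $\tilde r$ bounded and gives $S$ uniformly bounded increments, which is what lets us pass from the unbounded $S$ to the genuinely useful $V$. The whole proof thus reduces to producing a geometric drift inequality $\E[V(Z(T))\mid Z(0)=z]\le\theta\,V(z)+C$ with $\theta<1$, valid on all of $\s$.

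First I would upgrade the hypothesis \eqref{eq:criterion}. By ergodic decomposition, every invariant measure $\mu$ with $\mu(\s_0)=1$ is a mixture of ergodic ones carried by $\s_0$, so $\int\tilde r\,d\mu=\sum_i p_i r_i(\mu)>0$ for all such $\mu$. Since $\s_0$ is compact (\textbf{B3}), the set of invariant measures carried by $\s_0$ is weak-$*$ compact, and $\mu\mapsto\int\tilde r\,d\mu$ is continuous because $\tilde r$ is bounded and continuous (\textbf{B1}, \textbf{B4}, dominated convergence). Hence there is a constant $\rho_0>0$ with $\int\tilde r\,d\mu\ge2\rho_0$ for every invariant $\mu$ carried by $\s_0$.

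The crux is to turn this statement about boundary-invariant measures into a single finite-horizon drift bound near $\s_0$. Writing $\phi_T(z)=\tfrac1T\sum_{t=0}^{T-1}\E[\tilde r(Z(t))\mid Z(0)=z]$ — which is finite and continuous on all of $\s$, even on $\s_0$, because it involves $\log f_i$ rather than $\log u_i$ — I claim there is a fixed $T$ with $\phi_T\ge\rho_0$ on $\s_0$. If not, there are $z_T\in\s_0$ with $\phi_T(z_T)<\rho_0$ and $T\to\infty$; the expected occupation measures $\Pi_T=\tfrac1T\sum_{t<T}\mathrm{Law}(Z(t)\mid Z(0)=z_T)$ are all carried by $\s_0$ (since $\s_0$ is forward-invariant: once a coordinate $u_i$ vanishes it stays $0$), and any weak-$*$ limit is an invariant measure $\mu$ carried by the closed set $\s_0$ with $\int\tilde r\,d\mu\le\rho_0<2\rho_0$, contradicting the previous paragraph. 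With $T$ now fixed, continuity of $\phi_T$ extends the bound to $\phi_T\ge\rho_0/2$ on a neighborhood $U$ of $\s_0$ in $\s$; equivalently $\E[S(Z(T))\mid Z(0)=z]-S(z)\ge T\rho_0/2$ for $z\in U\setminus\s_0$. This compactness-and-limiting passage is the step I expect to be the main obstacle: one must check Feller continuity of $\phi_T$, invariance of the limiting occupation measure, and — the genuinely load-bearing observation — that forward-invariance of $\s_0$ forces the limit to charge only the boundary.

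Finally I would close the loop with standard stochastic-Lyapunov estimates. On $U\setminus\s_0$, bounded increments of $S$ (\textbf{B4}) and a Taylor expansion of the exponential give $\E[V(Z(T))\mid Z(0)=z]=V(z)\,\E[e^{-\eta(S(Z(T))-S(z))}\mid z]\le V(z)\,(1-\eta T\rho_0/2+O(\eta^2))$, so for $\eta$ small enough the ratio is $\le\theta<1$; off $U$ the set $\s\setminus U$ is bounded away from $\s_0$, where $V$ and its $T$-step image are bounded, so the inequality $\E[V(Z(T))\mid Z(0)=z]\le\theta V(z)+C$ holds on all of $\s$. This geometric drift for the $T$-step chain yields $\limsup_t\E[V(Z(t))]<\infty$ and an almost-sure bound on the occupation time of super-level sets of $V$, by the usual supermartingale and Birkhoff argument. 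Since $\min_i u_i\le\delta$ forces $V\gtrsim\delta^{-\eta\min_i p_i}$ on the compact set $\s$ for $\delta\le1$, Markov's inequality converts these into $\limsup_t\P[\min_i u_i(t)\le\delta]\le a\delta^b$ and the stated almost-sure occupation bound, with $b=\eta\min_i p_i$.
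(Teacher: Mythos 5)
The paper does not prove this theorem itself---it is imported verbatim as Theorem 1 of \citet{benaim_schreiber2019}---and your reconstruction follows essentially the same route as that source: Hofbauer's average Lyapunov function $\prod_i u_i^{-\eta p_i}$, a uniform positive lower bound on $\sum_i p_i r_i(\mu)$ over boundary invariant measures via weak-$*$ compactness and ergodic decomposition, a finite-horizon drift inequality near $\s_0$ obtained by the occupation-measure contradiction argument, and Markov's inequality to convert the geometric drift into the persistence statements. The only place your sketch is materially thinner than the reference is the almost-sure occupation bound, where ``the usual supermartingale and Birkhoff argument'' should be replaced by either a strong law of large numbers for the bounded martingale-difference increments of $S(Z(t))$ or the lemma that weak-$*$ limit points of the empirical measures $\frac{1}{t}\sum_{s\le t}\delta_{Z(s)}$ are almost surely invariant; Birkhoff's theorem does not apply directly along a trajectory started from an arbitrary initial condition.
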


To identify when species are driven extinct, we consider the case when there is a subset $I\subset\{1,2,\dots,n\}$ of species that can not be invaded. Define
\[
\s^{I}:=\{(u,v)\in\s|u_j= 0\mbox{ whenever }j\notin I\}
\]
and for $\delta> 0$, define \[\s^{I,\delta}:=\{(u,v)\in\s|u_j\le \delta\mbox{ whenever }j\notin I\}.\]  We say \emph{$\s^I$ is accessible} if for all $\delta>0$, there exists $\gamma>0$ such that
\[
\P[ z(t)\in \s^{I,\delta} \mbox{ for some }t\ge 1]\ge \gamma
\]
whenever $Z(0)=(u,v)$ satisfies $\prod_{i}u_i>0.$ Intuitively, this accessibility conditions states that with probability one, the process will eventually enter any neighborhood of $\s^I$. As the process is Markov, this implies that the process will enter this neighborhood infinitely often. The following Theorem follows from \citep[Thm. 3]{benaim_schreiber2019}.

\begin{thm}\label{thm:exclusion} Let $I$ be a strict subset of $\{1,2,\dots,n\}$. Assume
\begin{enumerate}
\item[(i)]\eqref{eq:main} restricted $\s^I$ satisfies that there exist $p_i>0$ for $i\in I$ and $\sum_{i\in I}p_ir_i(\mu)>0$ for ergodic $\mu$ with $\mu(\s^I_0)=1$ where $\s^I_0:=\{z=(u,v)\in \s^I: \prod_{i\in I}u_i=0\}$, 
\item[(ii)] $r_j(\mu)<0$ for any $j\notin I$ and ergodic $\mu$ satisfying $S(\mu)=I$, and
\item[(iii)] $\s^{I}$ is accessible.
\end{enumerate}
Then
\begin{equation}\label{eq:exclude}
\P\left[\limsup_{t\to\infty} \frac{1}{t}\log\rm{dist}(z(t),\s_0)<0\right]=1 \mbox{ whenever }Z(0)=z\in\s.
\end{equation}
\end{thm}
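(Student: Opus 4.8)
\emph{Proof strategy.} Because $\s_0=\set{(u,v)\in\s:\prod_i u_i=0}$, a direct computation gives $\dist(z,\s_0)=\min_{1\le i\le n}u_i$, and since $I$ is a \emph{strict} subset there is some $j_0\notin I$ with $\dist(z(t),\s_0)\le u_{j_0}(t)$; moreover \textbf{B1} keeps $u_i(t)>0$ for all $t$ whenever $u_i(0)>0$, so $\dist(z(t),\s_0)>0$ along the relevant trajectories. It therefore suffices to prove that each invader density decays exponentially, i.e. that $\limsup_{t\to\infty}\frac1t\log u_j(t)<0$ almost surely for every $j\notin I$. First I would make the standard averaged-growth reduction: writing $\log u_j(t)=\log u_j(0)+\sum_{s=0}^{t-1}\log f_j(z(s),\xi(s))$ and setting $\tilde f_j(z):=\E[\log f_j(z,\xi(s))]$, the increments $\log f_j(z(s),\xi(s))-\tilde f_j(z(s))$ form a bounded (by \textbf{B4}) martingale difference sequence, so the martingale strong law kills their time-average; by \textbf{B1} and \textbf{B4}, $\tilde f_j$ is bounded and continuous, so the task reduces to bounding $\frac1t\sum_{s<t}\tilde f_j(z(s))=r_j(\Pi_t)$, where $\Pi_t:=\frac1t\sum_{s<t}\delta_{z(s)}$ is the empirical occupation measure. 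By compactness (\textbf{B3}) the family $\set{\Pi_t}$ is a.s. tight, by the Feller property (\textbf{B1}, \textbf{B2}) every weak limit point is invariant, and $r_j(\cdot)$ is weakly continuous; so everything rests on locating the weak limit points of $\set{\Pi_t}$ and signing $r_j$ there.

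Granting for the moment the \emph{confinement} claim that, almost surely, every weak limit point $\mu$ of $\set{\Pi_t}$ is supported on $\s^I$, the rest is clean. Condition (i), via Theorem \ref{thm:persist} applied to the system restricted to $\s^I$ with extinction set $\s^I_0$, gives the almost-sure estimate that the time-fraction spent within $\delta$ of $\s^I_0$ is $O(\delta^b)$; letting $\delta\to0$ forces $\mu(\s^I_0)=0$ for every such limit. Decomposing $\mu$ into ergodic pieces $\mu_\omega$, almost every piece then has $\mu_\omega(\s^I_0)=0$ while being supported on $\s^I$, hence $S(\mu_\omega)=I$, so condition (ii) yields $r_j(\mu_\omega)<0$ and thus $r_j(\mu)<0$. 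Since the limit points form a compact set and $r_j$ is continuous, $\limsup_{t}r_j(\Pi_t)\le-\rho$ for a uniform $\rho>0$, whence $\limsup_t\frac1t\log u_j(t)\le-\rho$ for each $j\notin I$.

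The hard part, and the step I expect to absorb most of the work, is the confinement claim itself, because the invader drift $\sum_{j\notin I}p_j\tilde f_j$ is negative only \emph{on average} against invariant measures, not pointwise on $\s^I$; near the boundary $\s^I_0$ (where condition (ii) says nothing) an invader can even have positive drift, so a one-step supermartingale cannot trap the process. The plan is an excursion/stopping-time argument in the style of \citet{benaim_schreiber2019}. Using condition (iii), accessibility guarantees that from any initial point the process enters each neighborhood $\s^{I,\delta}$ with probability at least $\gamma>0$, hence (by the Markov property) returns infinitely often. On the event that the process stays in a small neighborhood of $\s^I$, the averaged negativity makes $W(t):=\sum_{j\notin I}p_j\log u_j(t)$, corrected by a linear-in-$t$ term, behave like a supermartingale across long blocks, so conditionally on entering $\s^{I,\delta}$ with $\delta$ small the process remains near $\s^I$ and $W(t)\to-\infty$ linearly with probability bounded below; a Borel--Cantelli / strong-Markov bootstrap then upgrades this positive probability to one. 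This simultaneously delivers confinement (all occupation limits supported on $\s^I$) and the exponential rate.

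Combining the three paragraphs, on a probability-one event each invader satisfies $\limsup_t\frac1t\log u_j(t)\le-\rho<0$, so $\limsup_t\frac1t\log\dist(z(t),\s_0)\le\limsup_t\frac1t\log u_{j_0}(t)\le-\rho<0$, which is exactly \eqref{eq:exclude}. The residents in $I$ persist by condition (i) and Theorem \ref{thm:persist}, confirming that the approach to $\s_0$ is driven precisely by the extinction of the species outside $I$.
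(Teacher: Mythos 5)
Your strategy --- reducing \eqref{eq:exclude} to exponential decay of each invader $u_j$, $j\notin I$, via occupation measures, establishing local stochastic attractivity of $\s^I$ from conditions (i)--(ii) by a supermartingale/excursion argument, and then globalizing with accessibility plus a strong-Markov/Borel--Cantelli bootstrap --- is exactly the architecture of the proof the paper relies on: the paper does not reprove the result but cites \citet[Thm.~3]{benaim_schreiber2019}, whose proof combines their Theorem 2 (local attractivity of $\s^I$ from (i) and (ii)) with condition (iii) in precisely this two-step fashion. The one step you leave as a sketch (the capture/confinement estimate near $\s^I$, where condition (i) must offset the fact that (ii) says nothing near $\s^I_0$) is the content of that cited Theorem 2, so your outline matches the intended proof rather than offering a genuinely different route.
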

Condition (i) in Theorem~\ref{thm:exclusion} ensures the set of species in $I$ coexist in the sense of stochastic persistence. Condition (ii) implies that the per-capita growth rates are negative for all of the species not in $I$. Conditions (i) and (ii) are sufficient to ensure the local attractivity of $\s^I$ in a stochastic sense--see Theorem 2 in \citep{benaim_schreiber2019}. Condition (iii) ensures the global attractivity with probability one. 

\section{Proof of Theorem~\ref{thm:one}}\label{sec:proof1}
Assume that $\alpha:=\E[\ln R_1(t)]- \E[a_1(t)]\E[I(t)]<0$.  As $y(t)\ge I(t-1)$ for all $t\ge 1$, it follows that for all $t\ge 1$
\[
\begin{aligned}
x_1(t+1)\le& R_1(t)x_1(t)\exp(-a_1(t)I(t-1))\\
\le  &x_1(0)\bar{R}\prod_{s=1}^{t}R_1(s)\exp(-a_1(s)I(s-1)).
\end{aligned}
\]
The strong law of large numbers implies that with probability one 
\[
\limsup_{t\to\infty}\frac{\ln x_1(t+1)}{t+1}\le 
\lim_{t\to\infty} \frac{1}{t+1}\left(\ln \bar{R} x_1(0)+\sum_{s=1}^t
\left( \ln R_1(s)- a_1(s)I(s-1)\right)\right)=\alpha<0. 
\]

Now assume $\alpha:=\E[\ln R_1(t)]-\E[a_1(t)]\E[I(t)]>0$. We will use Theorem~\ref{thm:persist} with $n=1$, $u_1=x_1$ and $v_1=y$ in \eqref{eq:main2}. On $\s_0=\{(x_1,y)\in \s: x_1=0\}$, the dynamics are given by $x_1(t)=0$ for all $t$ and $y(t)=I(t)$ for all $t\ge 0$. As the $I(t)$ are i.i.d., the only ergodic invariant measure $\mu(dx,dy)$ for the dynamics on $\s_0$ is determined by the law $m(dy)$ of $I(t)$ i.e. $\int_{\{0\}\times A}\mu(dx,dy)=\int_A m(dy)$ for any Borel set $A\subset [0,\infty)$.  For this invariant measure, the per-capita growth rate of the host equals
\[
r_1(\mu)=\int_0^\infty \E[\ln R_1(t)] - a_1(t)y]m(dy)=
\alpha>0.
\]
Hence, Theorem~\ref{thm:persist} implies the first two conclusions for the case of $\alpha>0$. For the final conclusion, let $\nu(dx_1,dy)$ be any ergodic measure such that $\nu(\s\setminus\s_0)=1$. Then, Proposition~\ref{prop:zero} implies 
\begin{equation}\label{eq:ergodic}
0= r_1(\nu)=\int \E[\ln R_1(t)-a_1(t)y]\nu(dx_1,dy)=\E[\ln R_1(t)]-\E[a_1(t)]\int y \nu(dx_1,dy).
\end{equation}
By the ergodic decomposition theorem~\citep[Theorem 4.1.12]{katok-hasselblatt-95}, every invariant probability measure $\mu$ satisfying $\mu(\s\setminus \s_0)=1$ is a convex combination of ergodic measures $\nu$ satisfying $\nu(\s\setminus \s_0)=1$. \eqref{eq:ergodic} applied to each of these ergodic measures $\nu$ in the decomposition of $\mu$ implies the final conclusion of the case $\alpha>0.$

\section{Proof of Theorem~\ref{thm:two}}\label{sec:proof2}

First, we show that host species $1$ is stochastically persistent. To this end, we use Theorem~\ref{thm:persist} with $u=x_1$ and $v=(x_2,\dots,x_k,y)$ in \eqref{eq:main2} i.e. the other host species and the parasitoid are treated as auxiliary variables. For these choices, the extinction set is $\s_0=\{(x,y)\in \s: x_1=0\}.$ Let $\mu(dx,dy)$ be an ergodic invariant probability measure on $\s_0$. Then either $\mu(dx,dy)$ supports no host species in which case $r_1(\mu)=\E[\ln R_1(t)]-\E[a_1(t)]\E[I(t)]>0$ or $\mu(dx,dy)$ supports at least one host species $i\ge 2$. In the latter case, Proposition~\ref{prop:zero} implies that 
\[
0=r_i(\mu)=\E[\ln R_i(t)]-a_i(t) \int y \mu(dx,dy)
\]
and therefore $\int y\mu(dx,dy)=\E[\ln R_i(t)]/\E[a_i(t)]$. On the other hand, 
\[
r_1(\mu)=\E[\ln R_1(t)]-\E[a_1(t)] \int y\mu(dx,dy)=\E[\ln R_1(t)]-\E[a_1(t)]\frac{\E[\ln R_i(t)]}{\E[a_i(t)]}
\]
As $P_1^*=\E[\ln R_1(t)]/\E[a_1(t)]> \E[\ln R_i(t)]/\E[a_i(t)]=P_i^*$, it follows that $r_1(\mu)>0$. As we have shown that $r_1(\mu)>0$ for all ergodic measures supported by $\s_0$, Theorem~\ref{thm:persist} with  $p_1=1$ implies stochastic persistence as claimed.

Next, we show that for $i\ge 2$
\[
\limsup_{t\to\infty}\frac{\ln x_i(t)}{t}<0 \mbox{ with probability one}
\]
whenever $x_i(0)x_1(0)>0$. To prove this conclusion, we verify the conditions of Theorem~\ref{thm:exclusion} with $u=(x_1,\dots,x_k)$ and $v=y$ (i.e. only the parasitoid is an auxiliary variable) in \eqref{eq:main2}, and $I=\{1\}$ in conditions (i)--(iii) in Theorem~\ref{thm:exclusion}. For these choices, $\s^I=\{(x,y)\in \s: x_2=\dots=x_k=0\}$. Theorem~\ref{thm:one} applied to the $x_1-y$ subsystem implies condition (i) of Theorem~\ref{thm:exclusion}. Next, we verify condition (ii) i.e. $r_i(\mu)<0$ for all $2\le i\le n$ and ergodic probability measures $\mu$ such that $S(\mu)=\{1\}=I$. Let $\mu$ be such an ergodic measure. Theorem~\ref{thm:one} implies that  for $i\ge 2$
\[
r_i(\mu)=\E[\ln R_i(t)]-\E[a_i(t)] \frac{\E[\ln R_1(t)]}{\E[a_1(t)]}=\E[a_i(t)](P_i^*-P_1^*)<0
\]
as we have assumed that $P_1^*>P_i^*.$
Next we verify assumption (iii) of Theorem~\ref{thm:exclusion}. Consider any initial condition $(x(0),y(0))\in \s$ such that $x_1(0)>0$ and $\sum_{i=2}^k x_i(0)>0$. Define the occupational measure
\[
\Pi_t=\frac{1}{t}\sum_{s=1}^t \delta_{z(s)}
\] 
where $\delta_z$ is a Dirac measure at $z$ i.e. for any Borel set $A\subset \s$, $\delta_z(A)=1$ if $z\in A$ and $0$ otherwise. By Lemma 4 of \citep{benaim_schreiber2019} and the stochastic persistence of host species $1$ from the first part of this proof,  the weak* limit points $\mu$ of $\Pi_t$ as $t\to\infty$ are, with probability one, invariant probability measures that satisfy $\mu(\{(x,y)\in \s: x_1=0\})=0$ and $r_1(\mu)=0$. Hence, for these weak* limit points $\mu$, we have $\int y \mu(dx,dy)=P_1^*$. For such a $\mu$, we claim that $\mu(\{(x,y):x_1>0,x_2=\dots=x_n=0\})=1$. Suppose, to the contrary, that for some $i\ge 2$, $\mu(\{(x,y)\in\s:x_1>0,x_i>0\})>0$. Then, by the  ergodic decomposition theorem~\citep[Theorem 4.1.12]{katok-hasselblatt-95}, there is an ergodic probability measure $\nu$ such that  
$\nu(\{(x,y)\in\s:x_1>0,x_i>0\})=1$. By Proposition~\ref{prop:zero},  \[0=r_i(\nu)/\E[a_i(t)]=\E[\ln R_i(t)]/\E[a_i(t)] - \E[\ln R_1(t)]/\E[a_1(t)]=P_i^*-P_1^*,\] a contradiction to our assumption that $P_1^*>P_i^*$. Hence, with probability one, the weak* limit points $\mu$ of $\Pi_t$ as $t\to\infty$ satisfy $\mu(\{(x,y)\in\s:x_1>0,x_2=\dots=x_n=0\})=1$ as claimed. In particular, this implies for any neighborhood $U$ of $\{(x,y)\in\s:x_1>0,x_2=\dots=x_n=0\}$, $z(t)$ enters $U$ infinitely often with probability one. Hence, condition (iii) of Theorem~\ref{thm:exclusion} is satisfied and \eqref{eq:exclude} implies 
\[
\limsup_{t\to\infty}\frac{1}{t}\ln \max_{2\le i \le k}x_i(t) <0 \mbox{ with probability one}\]
as claimed. $\qed$

\bibliographystyle{unsrtnat}
\bibliography{seb}

\end{document}